\newcommand{\uinvnorm}{|\kern-1pt|\kern-1pt|}
\newcommand{\poly}{\operatorname{poly}}
\newcommand{\ket}[1]{\vert #1 \rangle}
\newcommand{\bra}[1]{\langle #1 \vert}
\theoremstyle{plain}
\newtheorem{theorem}{Theorem}
\newtheorem{proposition}[theorem]{Proposition}
\theoremstyle{definition}
\theoremstyle{remark}
\begin{document}

\bibliographystyle{amsplain}

\title{A quantum algorithm to solve nonlinear differential equations}
\author{Sarah K.\ Leyton}
\author{Tobias J.\ Osborne}
\thanks{We'd like to thank Aram Harrow for helpful correspondence. SKL was supported by the EPSRC and TJO was supported, in part, by the University of London central research fund.}
\address{Department of Mathematics, Royal Holloway, University of London, Egham, TW20 0EX, UK}
\email{tobias.osborne@rhul.ac.uk}

\begin{abstract}
In this paper we describe a quantum algorithm to solve sparse
systems of nonlinear differential equations whose nonlinear terms
are polynomials. The algorithm is nondeterministic and its expected
resource requirements are polylogarithmic in the number of variables
and exponential in the integration time. The best classical
algorithm runs in a time scaling linearly with the number of
variables, so this provides an exponential improvement. The
algorithm is built on two subroutines: (i) a quantum algorithm to
implement a nonlinear transformation of the probability amplitudes
of an unknown quantum state; and (ii) a quantum implementation of
Euler's method.
\end{abstract}

\maketitle

\section{Introduction}
Systems of nonlinear differential equations arise in an astounding
number of applications across the sciences ranging from engineering,
biological systems, and mathematics. Their theory occupies a large
proportion of the scientific literature and many subtle and profound
techniques have been developed to solve them. The numerical solution
of nonlinear ODEs is now a mature and well-established topic (see,
eg., \cite{press:2007a}) and there are many stable and efficient
classical algorithms to numerically integrate these equations,
ranging from the ``workhorse'' Runge-Kutta method, to
predictor-corrector methods and implicit methods.

The discovery of quantum algorithms (see, eg., \cite{nielsen:2000a}
for an introduction to quantum computation) has ushered in a new era
where previously intractable problems can now be theoretically
solved efficiently on a quantum computer. Such was the optimisation
generated by the discovery of the methods such as Shor's factoring
algorithm that many felt it would be a matter of time before
efficient quantum algorithms would be found for many natural
problems. This optimism has been largely diminished by the
realisation that efficient quantum algorithms exploit subtle -- and
still largely mysterious -- properties of delicate quantum
superposition. Despite these complications several flavours of
quantum algorithm have been developed exploiting, variously,
algebraic structures, symmetries, and geometry (see, eg.,
\cite{childs:2008a} for a recent review). Arguably some of these
algorithms have less practical utility than, say, Shor's factoring
algorithm, and constitute more of a proof of principle.

Recently an efficient quantum algorithm to solve systems of
\emph{linear} equations \cite{harrow:2008b} was described. This
promises to allow the solution of, eg., vast engineering problems.
This result is inspirational in many ways and suggests that quantum
computers may be good at solving more than linear equations. In this
paper we investigate this hope in the context of nonlinear ODEs and
we present an efficient quantum algorithm to integrate large sparse
systems of nonlinear ODEs.

\section{Preliminaries}
In this paper we are concerned with the solutions of a set of $n$
first-order nonlinear ODEs whose nonlinear terms are given by $n$
polynomials $f_\alpha(\mathbf{z})$ in $n$ variables $z_j$, $j=1, 2,
\ldots, n$, over $\mathbb{C}$. That is, we look for solutions
$\mathbf{z}(t)$ of the simultaneous set
\begin{equation}\label{eq:odes}
\begin{split}
\frac{dz_1(t)}{dt} &= f_1(z_1(t), z_2(t), \ldots, z_n(t)) \\
\frac{dz_2(t)}{dt} &= f_2(z_1(t), z_2(t), \ldots, z_n(t)) \\
&\vdots \\
\frac{dz_n(t)}{dt} &= f_n(z_1(t), z_2(t), \ldots, z_n(t)), \\
\end{split}
\end{equation}
subject to the boundary condition $\mathbf{z}(0) = \mathbf{b}$.
Standard results ensure that a solution to this initial value
problem exists and is unique (see, eg., \cite{arnold:1992a}).

We'll mostly only describe the algorithm for \emph{quadratic}
systems (the extension to higher degrees is straightforward).
Quadratically nonlinear equations can exhibit a wide variety of
phenomena, including, chaos and anomalous diffusion and classical
examples include the eponymous \emph{Lorenz system} and the
\emph{Orszag-McLaughlin dynamical system}.

We encode the variables $z_j(t)$ as the \emph{probability
amplitudes} of a quantum state of an $(n+1)$-level quantum system:
\begin{equation}
|\phi\rangle = \frac{1}{\sqrt{2}}| 0 \rangle +
\frac{1}{\sqrt{2}}\sum_{j=1}^n z_j|j\rangle,
\end{equation}
where, to ensure that the state is normalised, we require
$\sum_{j=1}^n |z_j|^2 = 1$. (Our constructions ensure that
$|\phi\rangle$ always remains normalised.) While it is convenient to
regard $|\phi\rangle$ as the state of a single $(n+1)$-level quantum
system, when actually implementing the algorithm on a quantum
computer we'll encode $|\phi\rangle$ as a state of $\log(n)$ qubits
in the natural way.

We describe our procedure in two stages: (1) a method to effect
nonlinear transformations of the probability amplitudes of
$|\phi\rangle$; (2) a quantum algorithm to implement Euler's method.

\section{A quantum algorithm to effect a nonlinear transformation of the
amplitudes}\label{sec:nonlintx} In this section we describe a
nondeterministic algorithm to prepare quantum states whose
amplitudes are nonlinear functions of those of some input quantum
state.

Suppose we want to effect a \emph{quadratic} transformation on the
probability amplitudes $z_j$ of $|\phi\rangle$. (We assume
throughout that the initial state $|\phi\rangle$ can be efficiently
prepared on a quantum computer, eg., using a combination of the
methods of \cite{rudolph:2002a} and \cite{harrow:2008b}.) Since the
amplitudes are unknown this is, in general, impossible and,
unfortunately, this is the generic setting when integrating ODEs as
every point on the solution trajectory can be regarded as an initial
condition for the system. But suppose we have two copies of
$|\phi\rangle$. In this case the probability amplitudes of the
tensor product are given by
\begin{equation}
|\phi\rangle|\phi\rangle = \frac{1}{2}\sum_{j,k = 0}^n z_jz_k
|jk\rangle,
\end{equation}
where, for convenience, we set $z_0 = 1$ from now on. Evidently
every monomial $z_j^{l_j}z_k^{l_k}$, $l_j, l_k \le 1$, of degree
less than $2$ appears (more than once) in this expansion. Suppose we
want to iterate the transformation
\begin{equation}
\mathbf{z} \mapsto F(\mathbf{z}),
\end{equation}
where
\begin{equation}
F(\mathbf{z}) = \begin{pmatrix} f_1(\mathbf{z}) \\ f_2(\mathbf{z})
\\ \vdots \\ f_n(\mathbf{z})
\end{pmatrix},
\end{equation}
and $f_\alpha$, $\alpha = 1, 2, \ldots, n$, are quadratic
polynomials
\begin{equation}
f_\alpha(\mathbf{z}) = \sum_{k,l=0}^n a^{(\alpha)}_{kl} z_kz_l,
\end{equation}
with $a^{(\alpha)}_{kl} = a^{(\alpha)}_{lk}$ and $f_0(\mathbf{z}) =
1$. Thus we aim to prepare the quantum state
\begin{equation}
|\phi'\rangle = \frac{1}{\sqrt{2}}\sum_{\alpha=0}^n
f_\alpha(\mathbf{z})|\alpha\rangle.
\end{equation}
where, to simplify matters, we've assumed that the transformation is
\emph{measure preserving}, i.e.,
\begin{equation}
1 = \sum_{j=1}^n |z_j|^2 = \sum_{\alpha=1}^n
f_\alpha^*(\mathbf{z})f_\alpha(\mathbf{z}).
\end{equation}
The measure-preservation assumption plays an important role in our
constructions, however, if it is relaxed our algorithm still
proceeds unchanged: only the success probability is modified.

To ensure that our method is efficient we need to make several extra
assumptions beyond measure preservation. The first assumption is
that $|a_{kl}^{(\alpha)}| = O(1)$, $k, l, \alpha = 1, 2, \ldots, n$.
The second assumption is that the map $F$ is \emph{sparse}, which
means that
\begin{equation}
|\{(k,l)\,|\, a_{kl}^{(\alpha)} \not=0\}| \le s/2, \quad \alpha = 1,
2, \ldots, n,
\end{equation}
\emph{and}
\begin{equation}
|\{\alpha\,|\, a_{kl}^{(\alpha)} \not=0\}| \le s/2, \quad k,l = 1,
2, \ldots, n,
\end{equation}
where $s = O(1)$. Note that the assumption of sparsity means that
each $f_\alpha(\mathbf{z})$ can only involve at most $s/2$ monomials
and that each variable appears in at most $s/2$ polynomials
$f_\alpha(\mathbf{z})$. The final assumption is that the Lipschitz
constant for our system, i.e., that number $\lambda$ such that
\begin{equation}
\|F(\mathbf{x}-\mathbf{y})\| \le \lambda \|\mathbf{x}-\mathbf{y}\|
\end{equation}
in the ball $\|\mathbf{x}\|^2\le 1$ and $\|\mathbf{y}\|^2\le 1$, is
$O(1)$. While these assumptions are rather restrictive, as we
discuss, there are still many important systems which satisfy them.
We also later describe how to relax these assumptions.

It turns out that implementing the desired transformation will, in
general, require that we make some destructive nonunitary
transformation of the system's state. To understand what is required
we now set up the operator
\begin{equation}
A =  \sum_{\alpha,k,l=0}^n a^{(\alpha)}_{kl} |\alpha0\rangle\langle
kl|.
\end{equation}
We now adjoin a qubit ``pointer'' $P$ and use $A$ to set up a
hamiltonian (this is essentially the von Neumann measurement
prescription \cite{peres:1993a}):
\begin{equation}
H = -iA\otimes |1\rangle_P \langle 0| + iA^\dag\otimes
|0\rangle_P\langle 1|.
\end{equation}
We now initialise our system in the state
\begin{equation}
|\phi\rangle|\phi\rangle|0\rangle_P,
\end{equation}
and evolve according to $H$ for a time $t=\epsilon$. The time we can
evolve for depends crucially on the sparsity of $H$ and the desired
error \cite{berry:2007a}; roughly speaking, when the sparsity of $A$
is some constant $s$ we can efficiently simulate (in terms of
$\log(n)$), up to some prespecified precision, the evolution for any
constant time $t$.

After the evolution the system ends up in the state
\begin{equation}\label{eq:expseries}
\begin{split}
|\Psi\rangle &= e^{i\epsilon H}|\phi\rangle|\phi\rangle|0\rangle =
\sum_{j=0}^\infty \frac{(i\epsilon H)^{j}}{j!} |\phi\rangle|\phi\rangle|0\rangle \\
&= |\phi\rangle|\phi\rangle|0\rangle + \epsilon
A|\phi\rangle|\phi\rangle |1\rangle - \cdots.
\end{split}
\end{equation}
Now noting that
\begin{equation}
A|\phi\rangle|\phi\rangle = \frac12\sum_{\alpha,k,l=0}^n
a_{kl}^{(\alpha)}z_kz_l|\alpha\rangle |0\rangle =
\frac{1}{\sqrt{2}}|\phi'\rangle |0\rangle
\end{equation}
we now measure $|\Psi\rangle$ on the ancilla qubit and postselect on
``$1$''; we succeed with probability $\approx\frac12\epsilon^2$
(thanks to the measure preservation property; if our polynomial map
is not measure preserving then this success probability will be
proportionally lower) and our posterior state is
\begin{equation}
\frac{\sqrt{2}}{\epsilon}(\mathbb{I}\otimes\mathbb{I}\otimes \langle
1|)|\Psi\rangle= \frac1{\sqrt2}\sum_{\alpha,k,l=0}^n
a_{kl}^{(\alpha)}z_kz_l |\alpha\rangle |0\rangle =
|\phi'\rangle|0\rangle.
\end{equation}
If we fail then we end up with some ``poisoned'' state
$|\Psi'\rangle$ (this occurs with probability $\approx
1-\frac12\epsilon^2$), which we discard. In order to ensure, with
high probability, that we end up with at least one copy of
$|\phi'\rangle$ we need to repeat this process on roughly
$16/\epsilon^2$ fresh pairs $|\phi\rangle|\phi\rangle$. It is an
interesting question whether one can design $H$ so that the poisoned
state can be recovered and used again. Such a possibility would
allow one to substantially reduce the resource requirements of our
algorithm.

(Obviously, because the expansion (\ref{eq:expseries}) is truncated
to first order, we don't exactly produce $|\phi'\rangle$, but rather
some approximation to $|\phi'\rangle$. To correct this we actually
use the method described in \cite{harrow:2008b} to implement the
transformation
\begin{equation}\label{eq:actualmap}
|\phi\rangle|\phi\rangle|0\rangle \mapsto
\sqrt{\mathbb{I}-\epsilon^2 H^2} |\phi\rangle|\phi\rangle|0\rangle +
i\epsilon H|\phi\rangle|\phi\rangle|0\rangle,
\end{equation}
where $|\epsilon| \le 1/\|H\|$. Since $H$ is sparse we have, by
Ger\v{s}gorin's theorem \cite{horn:1990a}, that $\|H\| \le cs$,
where $c$ is a constant. This allows us to assume that when we
succeed we will obtain precisely a copy of $|\phi'\rangle$, modulo
only imperfections in the simulation of $e^{i\epsilon H}$. We
discuss these errors in the appendix.)

If we want to iterate the polynomial map $\mathbf{z} \mapsto
F(\mathbf{z})$ a constant (in $n$) number $m$ times to produce the
state $|\phi^{(m)}\rangle$ we need to start with
$(\frac{16}{\epsilon^2})^m$ initial states. Thus the total spatial
resources required by this algorithm scale as
$(\frac{16}{\epsilon^2})^m\log(n)$ because the cost of storing $n$
variables via encoding in $|\phi\rangle$ scales linearly with
$\log(n)$. (See Proposition~\ref{pacc} for a precise statement of
the spatial resource requirements of our algorithm.)

The simulation of the evolution $e^{itH}$ on a quantum computer
cannot be done perfectly; we quantify, in
proposition~\ref{prop:errors}, the errors that accumulate throughout
the running of the polynomial iteration algorithm: if the evolution
$e^{itH}$ can be simulated up to error $\delta$, then after $m$
steps the final state will have accumulated an error no worse that
$\delta (3\gamma)^{m+1}$. Thus, choosing the simulation error to
satisfy $\delta < (3\gamma)^{-m}$ will ensure that the $m$th iterate
is exponentially close the desired state. The costs
\cite{berry:2007a} of simulation to this level of precision imply
that the total running time $T$ of our algorithm scales as $T\sim
m\poly(\log(n)\log^*(n)) s^2 9^{\kappa \sqrt{m}}$, where $\kappa$ is
some $O(1)$ constant and
\begin{equation}
\log^*(n) \equiv \min\{r\,|\, \log_2^{(r)}(n) < 2\}
\end{equation}
with $\log^{(r)}$ denoting the $r$th iterated logarithm, and we've
assumed that the transformation (\ref{eq:actualmap}) is carried out
in parallel on each of the remaining pairs in each iteration. Thus
the parallelised temporal scaling is subexponential in $m$ and
polynomial in $\log(n)$.

To complete the description of our iteration algorithm we need to
describe how to read out information about the solution. After $m$
iterations the system will be, up to some prespecified error, in the
quantum state $|\phi^{(m)}\rangle$ encoding the $m$th iterate of $F$
in the probability amplitudes. To access information about the
solution we need to make measurements of the system. In principle,
any hermitian observable $M = \sum_{j,k=0}^n M_{j,k}|j\rangle
\langle k|$ may be measured to extract information. Via Hoeffding's
inequality we learn that we can estimate, using repeated
measurements, the quantity
\begin{equation}
\langle M \rangle \equiv \sum_{j,k=0}^n \overline{z}_j M_{j,k} z_k
\end{equation}
to within any desired additive error. In practice the observable $M$
is measured using (a discretisation of) von Neumann's measurement
prescription \cite{childs:2002b, peres:1993a}, so the evolution
$e^{itM}$ must be efficiently simulable on a quantum computer. Many
natural such operators fall into this class, including -- via a
quantum fourier transform -- the operator whose measurement
statistics provide information on the sums
\begin{equation}
S_k = \frac{1}{\sqrt{n}}\sum_{j=1}^n x_j e^{\frac{2\pi i j k}{n}}.
\end{equation}
As noted in \cite{harrow:2008b}, and evident from our construction
here, by measurement of multiple copies of $|\phi^{(m)}\rangle$ it
is also possible to extract information about polynomial functions
of the solution.

The flexibility to measure efficiently implementable hermitian
operators to extract information about the solution provides the key
to the exponential separation between our method and the best
classical method. Indeed, if we only wanted to learn one element
$z_j$, for some $j$, of the $m$th iterate then there is actually an
efficient classical algorithm with the same resource scaling. (This
is similar to the situation with linear equations \cite{demko:1984a,
benzi:1999a} where if we only want to learn about one element of the
solution vector of a well-conditioned sparse set of linear equations
we can do this efficiently classically. Indeed, even if the system
is badly conditioned, we can still learn about parts of the solution
in the well-conditioned subspace.)

\section{Solving nonlinear differential equations}
In this section we show how to use the method we've just described
to integrate a sparse set of simultaneous nonlinear differential
equations for any constant time.

Suppose we want to integrate the system\footnote{We again assume
that our system is measure preserving which means that $\sum_{j=0}^n
z_j^*f_j(\mathbf{z}(t)) + z_jf_j^*(\mathbf{z}(t)) = 0$ and that the
Lipschitz constant is O(1).} (\ref{eq:odes}) with the initial
condition $\mathbf{z}(0) = \mathbf{b}$, where $f_j$ are sparse
polynomials. The simplest approach is to use \emph{Euler's method}
\cite{press:2007a}: we pick some small step size $h$ and iterate the
map
\begin{equation}
z_j \mapsto z_j + h z_j' = z_j + h f_j(\mathbf{z}).
\end{equation}
While Euler's method is pretty terrible in practice, especially for
stiff ODEs, it does provide a basic proof of principle; more
sophisticated methods such as 4th order Runge-Kutta and
predictor-corrector methods are essentially only \emph{polynomially}
more efficient. As our algorithm is precisely an implementation of
Euler's method on the probability amplitudes, we can appeal to the
standard theory (see, eg., \cite{iserles:1996a}) concerning its
correctness and complexity; thus our algorithm suffers from all the
standard drawbacks of Euler's method, including a 1st-order decrease
in error in terms of the step size $h$. Nonetheless, it will provide
us with an algorithm scaling polynomially with $\log(n)$, where $n$
is the number of variables. However, it should be noted that our
method scales \emph{exponentially} with the inverse step size. Thus,
without modification, our algorithm is really only suited to
well-conditioned systems.

As we've indicated, the idea behind our approach is very simple: we
integrate the system using Euler's method. Thus, given
$|\phi(t)\rangle$ we aim to prepare
\begin{equation}
|\phi(t+h)\rangle = |\phi(t)\rangle + h |\phi'(t)\rangle + O(h^2),
\end{equation}
where now
\begin{equation}
|\phi'(t)\rangle = \frac1{\sqrt2}\sum_{\alpha=0}^n
f_\alpha(\mathbf{z}(t))|j\rangle =
\frac1{\sqrt2}\sum_{\alpha,k,l=0}^n a_{kl}^{(\alpha)}z_k(t)z_l(t)
|\alpha\rangle.
\end{equation}
To implement this transformation we suppose we have two copies of
$|\phi(t)\rangle$ and apply the method of the previous section to
implement the polynomial transformation $z_\alpha \mapsto z_\alpha +
h f_\alpha(\mathbf{z}(t))$. Note that this transformation is only
measure preserving to $O(h)$; the success probability will be
diminished by a factor of $O(h^2)$, which can be made negligible by
reducing $h$ in the standard way.

So, to integrate the system (\ref{eq:odes}) forward in time to
$t=O(1)$ we begin by discretising time into $m$ steps. (Thus our
step size is $h=t/m$.) We then prepare $(\frac{16}{\epsilon^2})^m$
copies of $|\phi(0)\rangle$, where $\epsilon$ is as in
\S\ref{sec:nonlintx}, and apply the method of \S\ref{sec:nonlintx}
to produce approximately $(\frac{16}{\epsilon^2})^m$ copies of
$|\phi(t/m)\rangle$ in expected time $\mbox{poly}(\log(n))$. We then
iterate until we produce at least one copy of $|\phi(t)\rangle$ in
expected time $\mbox{poly}(m, \log(n))$ with probability greater
than $1/3$. The resources required by this approach scale
polynomially with $\log(n)$ and \emph{exponentially} with $t$ and
$1/h$.

\section{Extensions and applications}
In this section we briefly describe several extensions and
applications of our quantum Euler's method.

What sort of systems will be tractable with our approach? We only
sketch a couple of examples here, leaving the wider application of
our approach to more detailed investigations. Because the sparsity
and the measure-preserving properties play a key role in the
resource scaling of our algorithm it is desirable to focus on those
sparse systems preserving the ``hamiltonian'' $\sum_{j=1}^n
|z_j|^2$. One example of such a system is the
\emph{Orszag-McLaughlin dynamical system} \cite{orszag:1977a,
orszag:1980a}:
\begin{equation}
\frac{dx_j}{dt} = x_{j+1}x_{j+2} + x_{j-1}x_{j-2} - 2x_{j+1}x_{j-1},
\quad j = 1, \ldots, n,
\end{equation}
with periodic boundary conditions $x_{n+1} \equiv x_1$. The
variables $x_j$ are real and preserve $\sum_{j=1}^n x_j^2$. The
dynamics generated by this system are extremely complicated.

Another example of a system which can be studied using our algorithm
is the (discrete) \emph{nonlinear Schr{\"o}dinger equation} on any
finite graph $G = (V, E)$ of bounded degree:
\begin{equation}
-i\frac{dz_v}{dt} = 2\mbox{deg}(v)z_v - \sum_{w\sim v} z_w +
|z_v|^k z_v, \quad v \in V,
\end{equation}
where $k \in \mathbb{N}$.

Several extensions of our algorithm are possible. The first obvious
extension is to systems whose nonlinearity is cubic or higher. This
can be done in the natural way by consuming 3 (or more, for higher
degrees of nonlinearity) copies of $|\phi(t)\rangle$ at each step to
implement the desired nonlinear transformation. A second extension
is to certain densely defined systems, i.e., those for which the $A$
operator is dense: because, following \cite{harrow:2008b}, we can
actually implement any efficiently computable function $g$ of $A$ in
the step, we can access some dense operators $g(A)$. A third
extension allows the efficient computation of the equal-time
statistics of deterministic dynamical systems. Here the idea is to
apply our method not to $|\phi\rangle = \sum_{\alpha}
z_\alpha(t)|\alpha\rangle$, but rather to one half of an initial
state which is an entangled pair:
\begin{equation}
|\Phi\rangle = \int d\mu(\mathbf{z})
|\phi(\mathbf{z})\rangle|\mathbf{z}\rangle_P,
\end{equation}
where $\langle \mathbf{z}|\mathbf{z}'\rangle = \delta(\mathbf{z}-
\mathbf{z}')$ and $d\mu(\mathbf{z})$ is an efficiently implementable
probability measure, eg., uniform or gaussian \cite{rudolph:2002a}.
Applying our algorithm to this initial state allows us to
efficiently sample equal-time statistics via measurements on $P$.
This should be contrasted with the classical Hopf functional
approach \cite{hopf:1952a} to solving this problem which introduces
a Fokker-Planck type partial differential equation to study
distributions of solution trajectories. (The application of the Hopf
functional approach to the Orszag-McLaughlin system is considered in
\cite{ma:2005a}.)

Finally, it is not implausible that there is a trade-off between
time and space; perhaps there is a quantum algorithm which
integrates a constant number of variables which scales polynomially
with $\log(t)$ and $-\log(h)$?

\section{Conclusions and future directions}
We have presented a quantum algorithm to iterate large systems of
sparse polynomial maps. We've also described an implementation of
Euler's method to integrate a system of ODEs. As long as the system
is sparse the resources required by the method are polynomial in
$\log(n)$, where $n$ is the number of variables. However, the
resources consumed by the method scale exponentially with the
inverse step size and the integration time, as well as with the
degree of the nonlinearity.

\providecommand{\bysame}{\leavevmode\hbox
to3em{\hrulefill}\thinspace}
\providecommand{\MR}{\relax\ifhmode\unskip\space\fi MR }
\providecommand{\MRhref}[2]{%
  \href{http://www.ams.org/mathscinet-getitem?mr=#1}{#2}
} \providecommand{\href}[2]{#2}

\appendix
\section{Proofs of the claims}
In this appendix we present the technical proofs of the claims made
in the text.

\subsection{Description of the algorithm}
In this subsection we provide the formal specification of our
iteration algorithm.

\begin{algorithm}
\caption{}\label{alg:iteration}
\begin{algorithmic}
\STATE Set $N = (\frac{p}{\gamma})^{-m}$, according
to Proposition~\ref{pacc}; %
\STATE Initialise the system in the state $\ket{\Phi} :=
(\ket{\phi}\ket{\phi}\ket{0})^{\otimes N/2}$, where $\ket{\phi} =
\frac{1}{\sqrt{2}}|0\rangle + \frac{1}{\sqrt{2}}\sum_{j=1}^n
z_j|j\rangle$; %
\STATE Set $H :=-iA\otimes |1\rangle \langle 0| +
iA^\dag\otimes|0\rangle\langle 1|$; \FOR{$i = m$ to $1$ in steps of
$-1$} %
\STATE $S=0$; %
\FOR {$j=1$ to $N$ in steps of $1$} %
\STATE Evolve the $j$th pair $|\phi_{2j-1}\rangle|\phi_{2j}\rangle|0\rangle$ according to (\ref{eq:actualmap}), measure the ancilla, and postselect on ``1''. %
\STATE $\ket{\Phi'_j} :=  -i\sqrt{2}(\mathbb{I} \otimes \mathbb{I}
\otimes \langle 1|)H|\phi_{2j-1}\rangle|\phi_{2j}\rangle|0\rangle$; %
\STATE {\bf if} \emph{success} (i.e. $\ket{\Phi'_j} = | \phi'_{2j-1}
\rangle |0\rangle$) {\bf then} $S := S+1$; %
\ENDFOR %
\STATE $N:=2 \lfloor S/2 \rfloor$; %
\STATE {\bf if} $S<2^{i-1}$ {\bf then Exception:} algorithm failed; %
\STATE Set $\ket{\Phi} := (\ket{\phi'}\ket{\phi'}\ket{0})^{\otimes N/2}$;%
\ENDFOR
\end{algorithmic}
\end{algorithm}
\subsection{Bounding the expected running time}
We now prove the following
\begin{proposition}\label{pacc}
Let $m\ge 6$. If the number $N$ of initial states for
Algorithm~\ref{alg:iteration} with $p=\epsilon^2/2$ is
$(\frac{p}{16})^{-m}$ then it succeeds in producing at least one
state $|\phi^{(m)}\rangle$ with probability at least $1/3$.
\end{proposition}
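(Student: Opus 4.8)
The plan is to recognise that, conditioned on the states entering iteration $i$, the inner loop of Algorithm~\ref{alg:iteration} is nothing but $N_i/2$ independent Bernoulli trials, so the number of successes is $S_i \sim \mathrm{Bin}(N_i/2,\, p)$ and the count passed to the next iteration is $N_{i-1} = 2\lfloor S_i/2\rfloor$, with $N_m = N = (16/p)^m$ fixed. Since every trial consumes two states and yields at most one, the counts shrink by an expected factor $p/2$ per step, so the expected final count is $(p/2)^m N = 8^m$, comfortably positive. The algorithm reports success precisely when it never triggers the exception, i.e.\ when $S_i \ge 2^{i-1}$ for every $i = m, m-1, \ldots, 1$; in that case $S_1 \ge 1$ delivers at least one copy of $|\phi^{(m)}\rangle$. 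Thus the proposition reduces to showing that this random process stays above the threshold schedule $2^{i-1}$ with probability at least $1/3$.

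First I would introduce a deterministic floor trajectory $L_i = (16/p)^i 4^{m-i}$, which satisfies $L_m = N$, is increasing in $i$, and decays by the factor $p/4$ per step. A one-line check gives $L_{i-1} \ge 2^{i-1}$ (using $16^{i-1}\ge 2^{i-1}$, $4^{m-i+1}\ge 1$, and $p<1$), so that whenever the count stays on or above the floor the abort threshold is automatically met, because $S_i \ge N_{i-1} \ge L_{i-1} \ge 2^{i-1}$ (recall $N_{i-1}=2\lfloor S_i/2\rfloor \le S_i$). The floor is chosen so that it is maintained with slack: if $N_i \ge L_i$ then $\mathbb{E}[S_i] = p N_i/2 \ge p L_i/2 = 2L_{i-1}$, i.e.\ the expected number of survivors is twice the next floor.

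The probabilistic core is then a first-breach argument combined with a Chernoff lower tail. Let $A_i = \{N_i \ge L_i\}$; since $A_m$ holds deterministically, the event that the floor is ever breached is contained in the union over $i$ of $A_i \cap \overline{A_{i-1}}$. On $A_i$ the number of trials is at least $\lfloor L_i/2\rfloor$, so by monotonicity of the binomial in its number of trials $S_i$ stochastically dominates $\mathrm{Bin}(\lfloor L_i/2\rfloor,\, p)$; hence $P(N_{i-1} < L_{i-1}\mid \mathcal{F}_i) \le P(S_i \le L_{i-1})\le \exp(-L_{i-1}/4)$ by the multiplicative Chernoff bound $P(X\le \tfrac12\mathbb{E}X)\le \exp(-\mathbb{E}X/8)$ with $\mathbb{E}X \approx 2L_{i-1}$. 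A union bound over the $m$ steps gives total breach probability at most $\sum_{i=1}^{m}\exp(-L_{i-1}/4) \le m\exp(-4^{m-1})$, since $L_0 = 4^m$ is the smallest floor. For $m \ge 6$ this is far below $2/3$, so the algorithm avoids the exception, and therefore outputs at least one $|\phi^{(m)}\rangle$, with probability at least $1/3$ and in fact with enormous room to spare.

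The one genuine obstacle is that the counts $N_i$ are random and coupled from one iteration to the next, so the per-step binomial estimates cannot simply be multiplied; the fix is the stochastic-monotonicity step above, which decouples the iterations by bounding each conditional breach probability uniformly on the event that the floor has so far held. Everything else --- the Chernoff estimate, the verification $L_{i-1}\ge 2^{i-1}$, and discarding the at-most-one state lost to the $2\lfloor S_i/2\rfloor$ rounding --- is routine. The hypothesis $m\ge 6$ is needed only to absorb these $O(1)$ rounding and constant factors (and the sub-leading correction to the per-pair success probability $p=\epsilon^2/2$ arising from the truncation in (\ref{eq:actualmap})) into the margin; the bound actually obtained is doubly-exponentially close to $1$, the stated $1/3$ being a deliberately loose but convenient constant.
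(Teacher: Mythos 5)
Your proof is correct and rests on the same two pillars as the paper's: a deterministic geometric floor for the surviving population (your $L_i=(16/p)^i4^{m-i}$ decays by $p/4$ per round, exactly matching the paper's $N_j\ge(\lambda/2)^{j-1}N$ with $\lambda=p/2$), and an exponential lower-tail bound on the binomial count of successes with the abort threshold set at half the conditional mean. Where you genuinely diverge is in how the $m$ rounds are combined. The paper writes $\mathbb{P}(m\text{ successes})$ as a product of per-round success probabilities, forces each factor to be at least $1-1/m$, and uses $(1-1/m)^m\ge 1/3$ for $m\ge 6$ --- which is where its hypothesis $m\ge 6$ actually does work, but which glosses over the fact that the rounds are not independent (the number of trials in round $j$ is a random variable determined by round $j+1$). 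Your first-breach decomposition with conditioning on $\mathcal{F}_i$ and stochastic domination by $\mathrm{Bin}(\lfloor L_i/2\rfloor,p)$ handles that dependence correctly, and the resulting union bound $m e^{-4^{m-1}}$ shows the true failure probability is doubly exponentially small, so $1/3$ is met with enormous slack and $m\ge 6$ is not really needed. Your version is therefore both more rigorous and quantitatively stronger; the paper's is shorter but leans on an unjustified independence assumption, and moreover its proof ends by choosing $N=(p/8)^{-m}$ rather than the $(p/16)^{-m}$ of the statement, a discrepancy your argument avoids by working with $(16/p)^m$ throughout. The only loose ends in your write-up are the acknowledged $O(1)$ rounding issues ($N_{i-1}=2\lfloor S_i/2\rfloor\ge S_i-1$, the floor in $\lfloor L_i/2\rfloor$) and the fact that the per-pair success probability is only approximately $p=\epsilon^2/2$; both are absorbed by the huge margin in your tail bound.
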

\begin{proof}
Let $S$ be the random variable which counts the number of successes
out of the $N/2$ trials in each round and let $N_j$ denote the
number of successfully produced states in round $j$. Hoeffding's
inequality provides a bound on the cumulative distribution function
of a binomial random variable $S$ with parameters $(N,p)$:
$$F(k;N,p) = \mathbb{P}(S \leq k) \leq e^{-\frac{2(Np-k)^2}{N}}$$
for $0 \leq k \leq \mathbb{E} \left[ S \right] =
\frac{\epsilon^2}{2}N$. We declare failure if $S \leq k$, so that,
defining $k=\lambda N$, we have
\begin{eqnarray*}
\mathbb{P}(\mbox{failure}) &\leq&  e^{-\frac{2(Np-\lambda N)^2}{N}}\\
&=& e^{-2N(p-\lambda)^2}.
\end{eqnarray*}
Therefore
$$\mathbb{P}(\mbox{success}) \geq 1 - e^{-2N(p-\lambda)^2}.$$
We want the $m$-step algorithm to produce at least one state
$|\phi^{(m)}\rangle$ encoding the $m$th iterate, i.e.\ we want $N_m
\geq 1$ with probability at least $1/3$. That is, we require
\begin{eqnarray*}
\mathbb{P}(\mbox{$m$ successes}) &=& \mathbb{P}(\mbox{Success}_1)\mathbb{P}(\mbox{Success}_2) \cdots \mathbb{P}(\mbox{Success}_m)\\
&\geq& (1 - e^{-2N_1(p-\lambda)^2})(1 - e^{-2N_2(p-\lambda)^2})\cdots (1 - e^{-2N_m(p-\lambda)^2})\\
&\geq& 1/3.
\end{eqnarray*}
If the algorithm succeeds, after one step we produce at least
$\frac{k}{2} = \frac{\lambda N}{2}$ states ($k$ successes) with
probability at least $(1 - e^{-2N_1(p-\lambda)^2})$, and so after
$j$ steps we have that
\begin{equation} \label{n_jbound}
N_j \geq \left(\frac{\lambda}{2}\right)^{j-1} N
\end{equation}
with probability at least $(1 - e^{-2N_1(p-\lambda)^2})(1 -
e^{-2N_2(p-\lambda)^2})\cdots (1 - e^{-2N_j(p-\lambda)^2})$.

To ensure that the final success probability is greater than $1/3$
we demand that
$$\mathbb{P}(\mbox{Success}_j) \geq 1- e^{-2N_j(p-\lambda)^2} \geq 1 - \frac{1}{m}$$
so that
$$\mathbb{P}(\mbox{$m$ successes}) \geq \left(1 - \frac{1}{m}\right)^m = e^{-1}+ O(1/m)\geq 1/3,$$
for $m\ge 6$. Therefore, for all $j$ we need that
\begin{equation*}
1 - e^{-2N_j (p-\lambda)^2} \geq 1 - \frac{1}{m}
\end{equation*}
that is,
\begin{equation} \label{1/mbound}
\frac{1}{m} \geq e^{-2N_j(p-\lambda)^2}.
\end{equation}
The RHS of (\ref{1/mbound}) is decreasing in $N_j$ and so is maximum
when $j=m$. From (\ref{n_jbound}) we have
$$N_m \geq \left(\frac{\lambda}{2}\right)^{m-1}N$$
substitution into (\ref{1/mbound}) gives
$$\frac{1}{m} \geq e^{-2\left(\frac{\lambda}{2}\right)^{m-1}N (p-\lambda)^2},$$
that is, we require that
$$\log m \leq 2\left(\frac{\lambda}{2}\right)^{m-1} N (p-\lambda)^2.$$
If we choose $\lambda = p/2$ and $N=(\frac{p}{8})^{-m}
=(\frac{\lambda}{4})^{-m}$ then from (\ref{n_jbound}) we have
\begin{eqnarray*}
N_m &\geq& \left(\frac{\lambda}{2}\right)^{m-1}N\\
&=& \left(\frac{\lambda}{2}\right)^{m-1} \left(\frac{\lambda}{4}\right)^{-m}\\
&=& 2^m \frac{2}{\lambda},
\end{eqnarray*}
and the result follows.
\end{proof}

\subsection{Bounding the accumulated error}
We now bound the accumulated error that builds up throughout the
running of the quantum iteration algorithm.

\begin{proposition}\label{prop:errors}
The error $\delta_m = \||\phi^{(m)}\rangle - |\psi^{(m)}\rangle\|$
that accumulates after $m$ iterations of the quantum iteration
algorithm~\ref{alg:iteration} is bounded by
$$\delta_m \le \frac{\eta}{3} \left(\frac{(3\gamma)^{m+1}-1}{3\gamma -
1} - 1\right)$$ where $\eta$ is the error in simulating
$e^{i\epsilon H}$, $|\psi^{(m)}\rangle$ is the actual state produced
by the algorithm, and $\gamma$ is an $O(1)$ constant which depends
on the sparsity $s$.
\end{proposition}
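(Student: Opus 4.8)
The plan is to reduce the statement to a single linear recurrence for the per-step error and then to sum the resulting geometric series. Model one ideal iteration as a rescaled bilinear map: writing $B(u,v):=(\mathbb{I}\otimes\mathbb{I}\otimes\langle 1|)\,e^{i\epsilon H}(u\otimes v\otimes\ket{0})$ for the action on two input kets $u,v$, the exact step is $\ket{\phi^{(k+1)}}=\gamma\,B(\ket{\phi^{(k)}},\ket{\phi^{(k)}})$, where $\gamma\sim\sqrt{2}/\epsilon$ is the renormalisation constant forced by the $\approx\tfrac12\epsilon^2$ postselection success probability. The measure-preservation hypothesis guarantees that this rescaled output is again a unit vector, so the \emph{same} constant $\gamma$ rescales the actual step. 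The actual iteration replaces $e^{i\epsilon H}$ by its imperfect simulation of the isometry (\ref{eq:actualmap}); I write $\tilde B=B+E$ with $\|E\|\le\eta$ for the simulation error, so $\ket{\psi^{(k+1)}}=\gamma\,\tilde B(\ket{\psi^{(k)}},\ket{\psi^{(k)}})$. The base case is $\delta_0=0$, since the initial copies of $\ket{\phi}$ are prepared exactly. Two facts will be used repeatedly: $B$ is a projection of a unitary applied to unit inputs, hence $\|B\|\le1$; and, because $|\epsilon|\le 1/\|H\|$ while Ger\v{s}gorin's theorem gives $\|H\|\le cs$, the constant $\gamma=O(s)$ is $O(1)$ and depends only on the sparsity, as claimed.

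First I would bound one step. Setting $\Delta:=\ket{\phi^{(k)}}-\ket{\psi^{(k)}}$ with $\|\Delta\|=\delta_k$, bilinearity of $B$ and $\tilde B=B+E$ give the exact identity
\[
B(\ket{\phi^{(k)}},\ket{\phi^{(k)}})-\tilde B(\ket{\psi^{(k)}},\ket{\psi^{(k)}})=B(\ket{\phi^{(k)}},\Delta)+B(\Delta,\ket{\phi^{(k)}})-B(\Delta,\Delta)-E(\ket{\psi^{(k)}},\ket{\psi^{(k)}}).
\]
Taking norms, using $\|B\|\le1$, that all input kets are unit vectors, and $\|E\|\le\eta$, the right-hand side is at most $2\delta_k+\delta_k^2+\eta$. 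Invoking the inductive hypothesis $\delta_k\le1$ to replace $\delta_k^2\le\delta_k$ collapses the two cross terms and the quadratic remainder into a single coefficient $3$, and multiplying through by the renormalisation factor $\gamma$ yields the promised recurrence
\[
\delta_{k+1}\ \le\ 3\gamma\,\delta_k+\gamma\eta.
\]

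Finally I would solve the recurrence. With $\delta_0=0$, unrolling $\delta_{k+1}\le3\gamma\,\delta_k+\gamma\eta$ gives $\delta_m\le\gamma\eta\sum_{j=0}^{m-1}(3\gamma)^j=\gamma\eta\,\frac{(3\gamma)^m-1}{3\gamma-1}$. Absorbing one factor of $3\gamma$ into the sum, i.e.\ writing $\gamma\,\frac{(3\gamma)^m-1}{3\gamma-1}=\tfrac13\sum_{j=1}^m(3\gamma)^j$ and noting $\sum_{j=1}^m(3\gamma)^j=\frac{(3\gamma)^{m+1}-1}{3\gamma-1}-1$, puts the bound in exactly the stated form $\delta_m\le\frac{\eta}{3}\bigl(\frac{(3\gamma)^{m+1}-1}{3\gamma-1}-1\bigr)$.

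I expect the main obstacle to be controlling the postselection/renormalisation step, namely showing that projecting onto the rare ``success'' branch and rescaling inflates the accumulated error by no more than the $O(1)$ factor $\gamma$, rather than by an uncontrolled $1/\epsilon$. This is precisely where the measure-preservation hypothesis is essential: it keeps both the ideal and the simulated outputs at norm $\approx 1$ after the fixed rescaling $\gamma$, so the rescaling is legitimate and the normalisation perturbation stays bounded; the Ger\v{s}gorin bound $\|H\|\le cs$ then certifies $\gamma=O(1)$. A secondary, bookkeeping obstacle is maintaining the self-consistency condition $\delta_k\le1$ along the whole induction, which is what justifies the step $\delta_k^2\le\delta_k$ that produces the clean coefficient $3\gamma$; this is guaranteed as long as $\eta$ is taken small enough that the final bound stays below $1$ (the regime $\delta<(3\gamma)^{-m}$ discussed in the text preceding the proposition).
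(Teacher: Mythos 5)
Your proposal is correct and follows essentially the same route as the paper: expand the bilinear step into two cross terms, a quadratic remainder, and a simulation-error term, bound it by $3\delta_k+\eta$ using $\delta_k\le 1$, absorb the postselection renormalisation into the $O(s)$ constant $\gamma$ to get $\delta_{k+1}\le\gamma(3\delta_k+\eta)$, and sum the geometric series. The only cosmetic difference is that the paper passes from subnormalised to normalised states via the bound $\delta_1\le\frac{2}{\|\,|\phi_1\rangle\|}\overline{\delta}_1$ (whence $\gamma=2\sqrt{2}/\epsilon$), which is exactly the renormalisation issue you correctly flag as the point needing care.
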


\begin{proof}
Our proof works by analysing the errors that accumulate in pairs
$|\phi_j\rangle|\phi_j\rangle$ of states during the $j$th round.
Suppose we have an error in the starting pair, so instead of
$\ket{\phi_0}$ we have $\ket{\phi_0} + \ket{\Delta \phi_0} =
\ket{\psi_0}$, say, where the initial error $\ket{\Delta\phi_0}$ has
magnitude $\delta_0$, i.e.\ $\delta_0 = \| \ket{\Delta\phi_0} \|$.
Suppose also that our simulation of $U = e^{i\epsilon H}$ is
imperfect, i.e.\ $V$ is the operator that is actually applied and $
\| U-V \|_\infty \leq \eta$.

We initialise our system into the state
$\ket{\psi_0}\ket{\psi_0}\ket{0}$ and evolve according to $V$. This
particular pair will then be in the state
\begin{eqnarray*}
V \ket{\psi_0} \ket{\psi_0} \ket{0}
\end{eqnarray*}
To measure this state on the ancilla qubit and postselect on ``$1$"
we apply the measurement operator $P_1 = \mathbb{I} \otimes
\mathbb{I} \otimes \ket{1}\bra{1}$. The (subnormalised) posterior
state of the pair is then
\begin{eqnarray*}
\ket{\overline{\psi}_1} = P_1 V (\ket{\phi_0}+\ket{\Delta
\phi_0})(\ket{\phi_0}+\ket{\Delta\phi_0})\ket{0}.
\end{eqnarray*}
Recall that if there were no errors in the starting state and the algorithm was perfect we would have
\begin{eqnarray*}
\ket{\phi_1} = \frac{\sqrt{2}}{\epsilon}P_1 U
\ket{\phi_0}\ket{\phi_0}\ket{0}.
\end{eqnarray*}
Let the error between the subnormalised posterior states
$\ket{\overline{\psi}_1}$ and $\ket{\overline{\phi}_1}$ after one
step have magnitude $\overline{\delta}_1= \| \ket{\overline{\psi}_1}
- \ket{\overline{\phi}_1} \|$. We bound this as follows
\begin{equation}
\begin{split}
\overline{\delta}_1 &\leq \| P_1 V (\ket{\Delta\phi_0}\ket{\phi_0} +\ket{\phi_0}\ket{\Delta\phi_0} + \ket{\Delta\phi_0}\ket{\Delta\phi_0}) + P_1(V-U)  \ket{\phi_0}\ket{\phi_0}\|\\
&\leq \| \ket{\Delta\phi_0}\ket{\phi_0} +\ket{\phi_0}\ket{\Delta\phi_0} + \ket{\Delta\phi_0}\ket{\Delta\phi_0}\| +  \|(V-U)\|_\infty\\
&\leq \delta_0 + \delta_0 + \delta_0^2 + \eta \le 3\delta_0 + \eta.
\end{split}
\end{equation}
We bound the error $\delta_1$ between the normalised posterior
states by
\begin{equation}
\begin{split}
\delta_1 &= \||\psi_1\rangle -|\phi_1\rangle \| \le \frac{2}{\|
|\phi_1\rangle \|}\overline{\delta}_1 \\
&= \frac{2\sqrt{2}}{\epsilon}(3\delta_0 + \eta) = \gamma(3\delta_0 +
\eta),
\end{split}
\end{equation}
where $\gamma$ is, by assumption, a constant of order $s$.

Repeating this argument allows us to set up the recurrence
\begin{equation}
\delta_j \le \gamma (3 \delta_{j-1}+\eta), \quad j=1, 2, \ldots, m.
\end{equation}
Solving the recurrence, and assuming that the initial states are
constructed perfectly (i.e., $\delta_0 = 0$) gives us
\begin{equation}
\delta_m \le \frac{\eta}{3} \left(\frac{(3\gamma)^{m+1}-1}{3\gamma -
1} - 1\right).
\end{equation}

\end{proof}


\begin{thebibliography}{10}

\bibitem{arnold:1992a}
Vladimir~I. Arnol$'$d, \emph{Ordinary differential equations},
Springer
  Textbook, Springer-Verlag, Berlin, 1992. \MR{1162307 (93b:34001)}

\bibitem{benzi:1999a}
Michele Benzi and Gene~H. Golub, \emph{Bounds for the entries of
matrix
  functions with applications to preconditioning}, BIT \textbf{39} (1999),
  no.~3, 417--438. \MR{1708693 (2000h:65047)}

\bibitem{berry:2007a}
Dominic~W. Berry, Graeme Ahokas, Richard Cleve, and Barry~C.
Sanders, \emph{Efficient quantum algorithms for simulating sparse
{H}amiltonians}, Comm. Math. Phys. \textbf{270} (2007), no.~2,
359--371; quant-ph/0508139. \MR{2276450 (2007k:81028)}

\bibitem{childs:2002b}
Andrew~M. Childs, Enrico Deotto, Edward Farhi, Jeffrey Goldstone,
Sam Gutmann,
  and Andrew~J. Landahl, \emph{Quantum search by measurement},  \textbf{66}
  (2002), no.~3, 032314; quant-ph/0204013.

\bibitem{childs:2008a}
Andrew~M. Childs and Wim van Dam, \emph{Quantum algorithms for
algebraic problems}, 2008; arXiv:0812.0380.

\bibitem{demko:1984a}
Stephen Demko, William~F. Moss, and Philip~W. Smith, \emph{Decay
rates for inverses of band matrices}, Math. Comp. \textbf{43}
(1984), no.~168, 491--499. \MR{758197 (85m:15002)}

\bibitem{rudolph:2002a}
Lov Grover and Terry Rudolph, \emph{Creating superpositions that
correspond to efficiently integrable probability distributions},
2002; quant-ph/0208112.

\bibitem{harrow:2008b}
Aram Harrow, Avinatan Hassidim, and Seth Lloyd, \emph{Quantum
algorithm for solving linear systems of equations}, 2008;
arXiv:0811.3171.

\bibitem{hopf:1952a}
Eberhard Hopf, \emph{Statistical hydromechanics and functional
calculus}, J.
  Rational Mech. Anal. \textbf{1} (1952), 87--123. \MR{0059119 (15,478a)}

\bibitem{horn:1990a}
Roger~A. Horn and Charles~R. Johnson, \emph{Matrix analysis},
Cambridge
  University Press, Cambridge, 1990. \MR{91i:15001}

\bibitem{iserles:1996a}
Arieh Iserles, \emph{A first course in the numerical analysis of
differential
  equations}, Cambridge Texts in Applied Mathematics, Cambridge University
  Press, Cambridge, 1996. \MR{1384977 (97m:65003)}

\bibitem{ma:2005a}
Ookie Ma and J.~B. Marston, \emph{Exact equal time statistics of
  {O}rszag-{M}c{L}aughlin dynamics investigated using the {H}opf characteristic
  functional approach}, J. Stat. Mech. Theory Exp. (2005), no.~10,
  P10007; nlin/0506021. \MR{2185395 (2006f:82048)}

\bibitem{nielsen:2000a}
Michael~A. Nielsen and Isaac~L. Chuang, \emph{Quantum computation
and quantum
  information}, Cambridge University Press, Cambridge, 2000. \MR{1 796 805}

\bibitem{orszag:1977a}
Steven~A. Orszag, \emph{Lectures on the statistical theory of
turbulence},
  Fluid dynamics/{D}ynamique des fluides (\'{E}cole d'\'{E}t\'e de {P}hysique
  {T}h\'eorique, {L}es {H}ouches, 1973), Gordon and Breach, London, 1977,
  pp.~235--374. \MR{0671112 (58 \#32388)}

\bibitem{orszag:1980a}
Steven~A. Orszag and John~B. McLaughlin, \emph{Evidence that random
behavior is
  generic for nonlinear differential equations}, Phys. D \textbf{1} (1980),
  no.~1, 68--79. \MR{573369 (81d:58041)}

\bibitem{peres:1993a}
Asher Peres, \emph{Quantum theory: concepts and methods}, Kluwer
Academic
  Publishers Group, Dordrecht, 1993. \MR{95e:81001}

\bibitem{press:2007a}
William~H. Press, Saul~A. Teukolsky, William~T. Vetterling, and
Brian~P.
  Flannery, \emph{Numerical recipes}, 3rd ed., Cambridge University Press,
  Cambridge, 2007. \MR{2371990}

\end{thebibliography}
\end{document}